\documentclass[journal]{IEEEtran}

\usepackage{amsmath,amssymb,amsfonts}
\usepackage{graphicx}
\usepackage{cite}
\usepackage{amsmath,amssymb,amsfonts}
\usepackage{amsthm}                          

\newtheorem{theorem}{Theorem}                
\newtheorem{corollary}{Corollary}    
\usepackage{color}
\usepackage{bm}
\usepackage{algorithm}
\usepackage{algorithmic}
\usepackage{url}
\usepackage{balance}

\newcommand{\E}{\mathbb{E}}
\newcommand{\R}{\mathbb{R}}

\newcommand{\jj}{\mathrm{j}}                      
\newcommand{\T}{\mathrm{T}}                        
\newcommand{\CRB}{\mathrm{CRB}}
\newcommand{\FIM}{\mathbf{J}}
\newcommand{\param}{\bm{\theta}}
\newcommand{\sm}{\sigma_{\mathrm{m}}}              

\begin{document}

\title{Cram\'{e}r--Rao Bounds for Magneto-Inductive Integrated Sensing and Communications}

\author{Haofan~Dong,~\IEEEmembership{Student Member,~IEEE,}
        and~Ozgur~B.~Akan,~\IEEEmembership{Fellow,~IEEE}
\thanks{H.~Dong and O.~B.~Akan are with the Internet of Everything Group,
Department of Engineering, University of Cambridge, Cambridge CB3~0FA, U.K.
(e-mail: hd489@cam.ac.uk; oba21@cam.ac.uk).}
\thanks{Ozgur B. Akan is also with the Center for neXt-generation Communications
(CXC), Department of Electrical and Electronics Engineering, Koç University, 34450 Istanbul, Turkey (email:oba21@cam.ac.uk)}}

\maketitle

\begin{abstract}
Magnetic induction (MI) enables communication in RF-denied environments (underground, underwater, in-body), where the medium conductivity imprints a deterministic signature on the channel. This letter derives a closed-form Cram\'{e}r--Rao bound (CRB) for the joint estimation of range and medium conductivity from MI pilot observations in an integrated sensing and communication (ISAC) framework. The Fisher information matrix reveals that the joint estimation penalty converges to 3\,dB in the near-field regime, meaning conductivity sensing adds at most a factor-of-two loss in ranging precision. Monte Carlo maximum-likelihood simulations confirm that the CRB is achievable under practical operating conditions.
\end{abstract}

\begin{IEEEkeywords}
Magnetic induction, integrated sensing and communication (ISAC),
Cram\'{e}r--Rao bound, joint parameter estimation, underground communication.
\end{IEEEkeywords}

\section{Introduction}\label{sec:intro}

Integrated sensing and communication (ISAC) has attracted
considerable attention as an enabling technology for
sixth-generation (6G) wireless systems, allowing a single waveform to
serve both data transmission and environmental
perception~\cite{Liu2022JSAC, Zhang2021JSTSP}.
Current ISAC research, however, predominantly
focuses on radio-frequency (RF) links operating at millimeter-
wave or sub-terahertz bands, where the sensing function relies
on target echoes to extract range, velocity, and angle~\cite{Liu2022JSAC,Liu2022COMST}. These
RF-centric designs implicitly assume free-space or near-free-
space propagation and therefore break down in lossy, RF-
denied media such as soil, seawater, and biological tissue,
where electromagnetic waves suffer prohibitive attenuation.

Magnetic induction (MI) communication offers a distinct
physical layer for such environments. Unlike far-field RF
propagation, MI exploits near-field inductive coupling between
coils, with a quasi-static magnetic field that is largely immune
to multipath fading and experiences negligible change in
medium permeability~\cite{Sun2010,Ma2025COMST}. These properties have made
MI a candidate technology for wireless underground
sensor networks (WUSNs)~\cite{Saeed2019COMST,Kisseleff2018}. 
A recent framework termed magneto-inductive ISAC (MI-ISAC) 
has shown that the deterministic MI channel structure enables 
joint communication and sensing in RF-denied 
environments~\cite{dong2026miisac}. Notably, the MI channel 
response carries a bijective imprint of the surrounding medium 
conductivity $\sigma_{\mathrm{m}}$ through eddy-current-induced 
attenuation and phase rotation~\cite{Sun2010}.
This physics offers
a natural opportunity for opportunistic sensing~\cite{Messer2006Science}:
just as commercial microwave links have been repurposed to
monitor rainfall from signal attenuation, an MI communication
link can simultaneously sense the medium it traverses without
dedicated sensing hardware. The sensed conductivity is of
direct practical value: in precision agriculture, soil
conductivity correlates with moisture content and
salinity~\cite{Saeed2019COMST}; in pipeline monitoring,
conductivity changes indicate fluid composition or
corrosion onset~\cite{Kisseleff2018}.

Despite this potential, existing MI literature treats medium
conductivity exclusively as a known system parameter. Channel 
models~\cite{Sun2010,Ma2025COMST} and network 
optimization~\cite{Kisseleff2018} assume $\sigma_{\mathrm{m}}$ is
given \emph{a priori}, while MI localization studies derive Cram\'{e}r--Rao 
bounds (CRBs) for position coordinates with $\sigma_{\mathrm{m}}$ held
fixed~\cite{Saeed2019}. While~\cite{dong2026miisac} 
established the MI-ISAC framework for geometric parameter estimation, 
it did not address the joint range--conductivity problem nor 
quantify the estimation penalty when both must be resolved 
simultaneously. The recent comprehensive survey by Ma 
\emph{et al.}~\cite{Ma2025COMST} explicitly identifies integrated MI
communication, navigation, and sensing as an open research
direction, further motivating this investigation.

In this letter, we extend the MI-ISAC framework 
to joint range--conductivity estimation and develop its 
estimation-theoretic analysis. The main contributions are:
\begin{enumerate}
\item A closed-form Fisher information matrix (FIM) is derived
for joint range and conductivity estimation from MI pilots,
revealing how geometric spreading and eddy-current absorption
contribute separable information.
\item An analytical penalty formula shows the CRB increase
due to joint estimation converges to 3\,dB in the
near-field regime, i.e., at most a factor-of-two loss
in ranging precision.
\item Monte Carlo MLE simulations ($5000$ trials) validate
the CRB achievability.
\end{enumerate}
%

\section{System Model}\label{sec:model}

\subsection{MI-ISAC Scenario}\label{sec:scenario}

We consider a magnetic induction (MI) link between two coaxial
coil transceivers embedded in a lossy medium (Fig.~\ref{fig:system}).
Each coil has radius~$a$ and $N$~turns; the coils are separated by
distance~$r$ in a medium of unknown conductivity~$\sm$.
The transmitter sends known pilots that serve both communication
and channel estimation~\cite{Sun2010, Kisseleff2018}.

\begin{figure}[!t]
  \centering
  \includegraphics[width=0.5\textwidth]{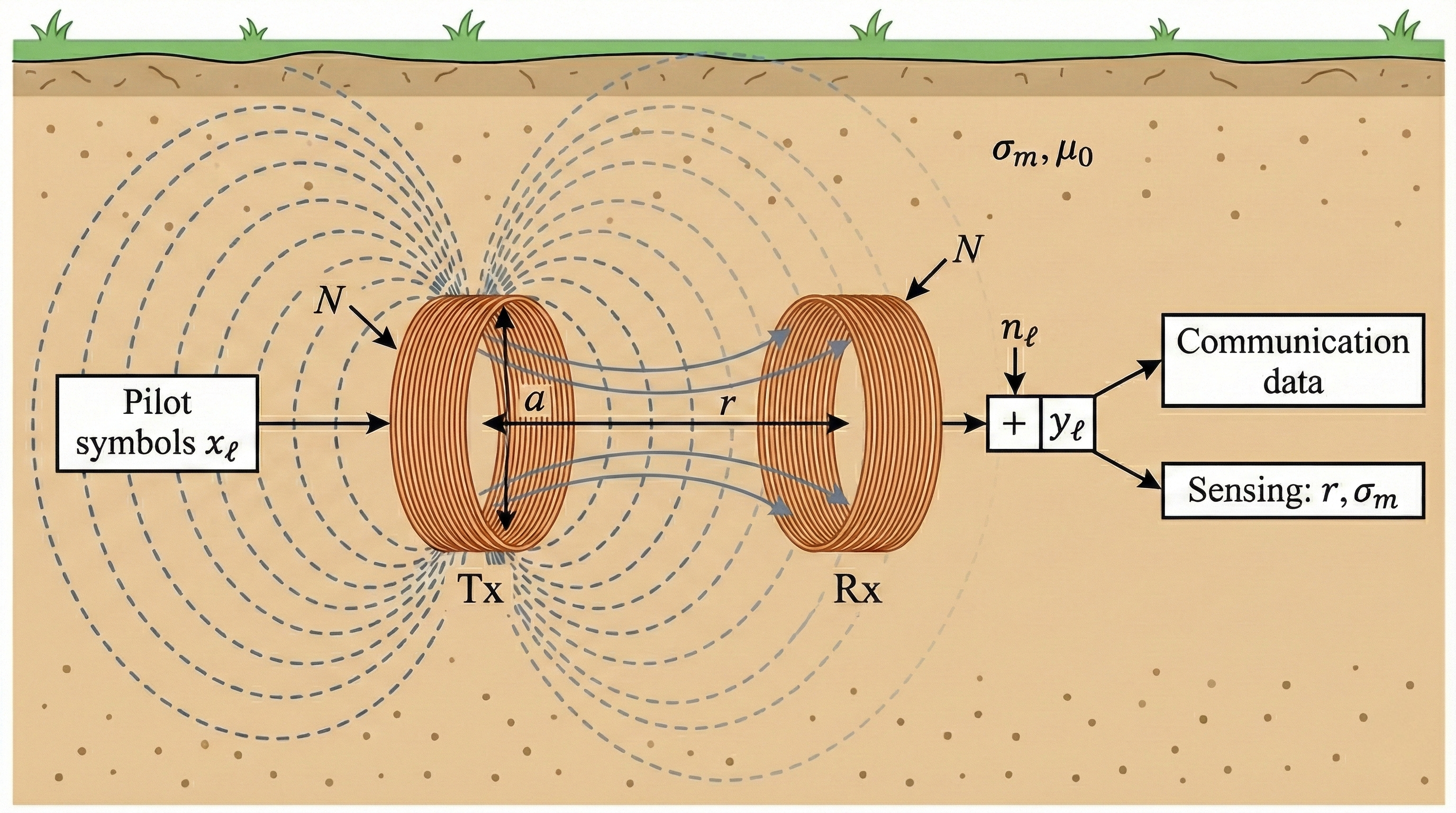}
  \caption{MI-ISAC system model.  Two coaxial coils (radius~$a$,
  $N$~turns) are separated by range~$r$ in a lossy medium
  (conductivity~$\sm$, permeability~$\mu_0$).  The transmitter
  sends $L$~pilot symbols; the receiver jointly estimates
  $\param = [r,\, \sm]^{\T}$ from the received signal, while also
  decoding communication data.}
  \label{fig:system}
\end{figure}

\subsection{MI Channel Model}\label{sec:channel}

The mutual inductance between two coaxial circular coils in a
homogeneous lossy medium with conductivity~$\sm$ and
permeability~$\mu_0$ is~\cite{Sun2010, Akyildiz2015}
\begin{equation}\label{eq:mutual}
  M(r,\sm) = \frac{\mu_0 \pi N^2 a^4}{2\, r^3}\,
  e^{-(1+\jj)\,\alpha\, r},
\end{equation}
where
\begin{equation}\label{eq:alpha}
  \alpha(\sm) = \sqrt{\pi f_0 \mu_0 \sm}
\end{equation}
is the field attenuation constant at operating frequency~$f_0$,
and the skin depth is $\delta = 1/\alpha$.
The exponential factor in~\eqref{eq:mutual} captures the
eddy-current loss: the real part of the exponent models amplitude
decay, while the imaginary part introduces a phase shift that
grows with the electrical distance~$\alpha r$.

It is convenient to introduce the dimensionless parameter
\begin{equation}\label{eq:kappa_r}
  \kappa r \triangleq \sqrt{2}\,\alpha\, r
    = r\sqrt{2\pi f_0 \mu_0 \sm}\,,
\end{equation}
which quantifies the ratio of coil separation to skin depth.
The regime~$\kappa r \ll 1$ corresponds to the \emph{near field}
where medium loss is negligible, while $\kappa r \gg 1$ indicates
heavy attenuation.

The normalized complex channel gain observed at the receiver is
\begin{equation}\label{eq:channel}
  h(r,\sm) = \frac{\omega\, M(r,\sm)}{Z_{\mathrm{ref}}}
  = \frac{C}{r^3}\, e^{-(1+\jj)\,\alpha\, r},
\end{equation}
where $\omega = 2\pi f_0$, $Z_{\mathrm{ref}}$ is a reference
impedance that normalizes the voltage transfer, and
$C \triangleq \omega \mu_0 \pi N^2 a^4 / (2\, Z_{\mathrm{ref}})$
absorbs all frequency- and geometry-dependent constants.
Note that $h$~depends on the two unknowns~$(r, \sm)$
through the near-field decay~$r^{-3}$ and the exponential
medium loss~$e^{-(1+\jj)\alpha r}$, respectively.

\subsection{Signal Model}\label{sec:signal}

The transmitter sends $L$~pilot symbols $\{x_\ell\}_{\ell=1}^{L}$
with per-symbol power $P_{\mathrm{tx}} = \E[|x_\ell|^2]$.
The received signal at the $\ell$-th pilot is
\begin{equation}\label{eq:signal}
  y_\ell = h(r,\sm)\, x_\ell + n_\ell,
  \quad \ell = 1,\ldots,L,
\end{equation}
where $n_\ell \sim \mathcal{CN}(0, N_0 B)$ is additive white
Gaussian noise with power spectral density~$N_0$ and
bandwidth~$B$. The channel~$h$ is treated as constant over the
$L$~pilot symbols (quasi-static assumption), which is well
justified for MI links where the near-field coupling changes
only with the physical displacement of the coils~\cite{Sun2010}.

Since the pilot waveform is known, the matched-filter
sufficient statistic for~$h$ is~\cite{Kay1993}
\begin{equation}\label{eq:hhat}
  \hat{h}_{\mathrm{MF}}
  = \frac{\sum_{\ell=1}^{L} y_\ell\, x_\ell^{*}}
         {\sum_{\ell=1}^{L} |x_\ell|^2}
  = h(r,\sm) + \tilde{n},
\end{equation}
where $\tilde{n} \sim \mathcal{CN}(0, N_0 B / (L\, P_{\mathrm{tx}}))$.
The observation~$\hat{h}_{\mathrm{MF}}$ contains all
information about the parameter vector
\begin{equation}\label{eq:theta}
  \param = \begin{bmatrix} r \\ \sm \end{bmatrix} \in \R^{2}_{>0}
\end{equation}
that is available in the pilot data.

\subsection{Problem Formulation}\label{sec:problem}

The goal is to characterize the fundamental estimation accuracy for
the \emph{joint} recovery of~$\param$ from~\eqref{eq:hhat} via
the Cram\'{e}r--Rao bound (CRB).
Specifically, for any unbiased estimator~$\hat{\param}$,
$\E[(\hat{\param}-\param)(\hat{\param}-\param)^{\T}] \succeq
\FIM^{-1}(\param)$, where~$\FIM(\param)$ is the $2\times 2$
Fisher information matrix (FIM).
The \emph{joint estimation penalty} is defined as
\begin{equation}\label{eq:penalty_def}
  \Delta_r \triangleq
  \frac{\CRB_r^{\mathrm{joint}}}{\CRB_r^{\mathrm{single}}}
  = \frac{[\FIM^{-1}]_{11}}{1/J_{11}} \geq 1,
\end{equation}
where $\CRB_r^{\mathrm{single}} = 1/J_{11}$ is the bound when
$\sm$ is known.  The penalty~$\Delta_r$ quantifies how much
harder it is to estimate range when the medium conductivity
must be simultaneously inferred.
The penalty for~$\sm$ is defined analogously.
Note that $\hat{h}_{\mathrm{MF}}$ provides two real degrees of
freedom (amplitude and phase), which suffices for the two
real unknowns~$(r,\sm)$; identifiability follows from the
injectivity of~$h(r,\sm)$ on~$\R^{2}_{>0}$, since the
near-field decay~$r^{-3}$ and the eddy-current phase
$e^{-\jj \alpha r}$ impose linearly independent constraints.

\section{Cram\'{e}r--Rao Bound Analysis}\label{sec:crb}
\begin{figure*}[!t]
  \centering
  \includegraphics[width=\linewidth]{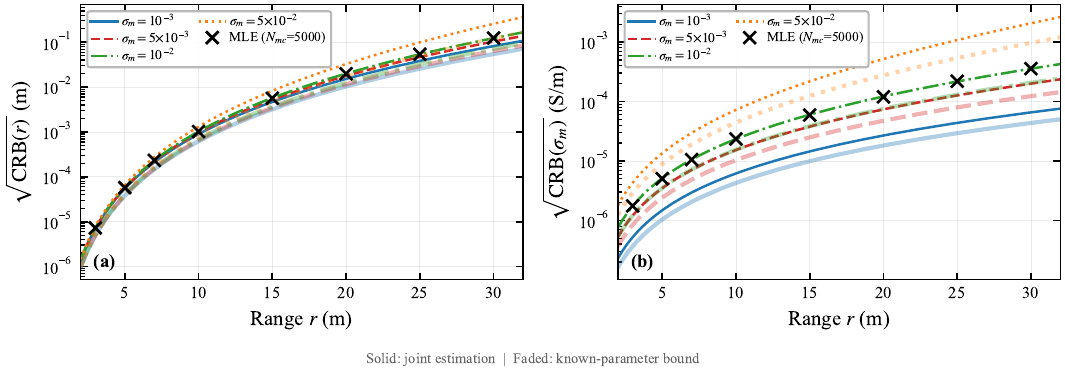}
  \caption{Joint vs.\ single-parameter CRB for (a)~range and
  (b)~conductivity estimation across four medium types.
  Solid: joint estimation; faded: known-parameter bound.
  Cross markers: MLE with $N_{\mathrm{mc}} = 5000$.}
  \label{fig:joint_vs_single}
\end{figure*}

This section derives the Fisher information matrix for the joint estimation of
$\param = [r,\, \sm]^{\T}$ from the sufficient statistic
$\hat{h}_{\mathrm{MF}}$ in~\eqref{eq:hhat}, and characterizes
the penalty incurred by estimating both parameters simultaneously.

Since $\hat{h}_{\mathrm{MF}} \sim \mathcal{CN}\!\big(h(\param),\,
N_0 B / (L P_{\mathrm{tx}})\big)$, the $(i,j)$-th entry of the
$2 \times 2$ FIM for the deterministic (non-random) parameter
vector~$\param$ is~\cite{Kay1993}
\begin{equation}\label{eq:fim_general}
  J_{ij}
  = \frac{2\,L\,P_{\mathrm{tx}}}{N_0 B}\;
    \mathrm{Re}\!\left[
    \frac{\partial h}{\partial \theta_i}\,
    \frac{\partial h^{*}}{\partial \theta_j}
    \right].
\end{equation}
The required partial derivatives of $h$ in~\eqref{eq:channel}
are
\begin{align}
  \frac{\partial h}{\partial r}
  &= h(r,\sm)\,\big[\!-\beta - \jj\,\alpha\big],
  \label{eq:dh_dr}
  \\[3pt]
  \frac{\partial h}{\partial \sm}
  &= h(r,\sm)\,\bigg[\!-(1+\jj)\,\frac{\alpha r}{2\sm}\bigg],
  \label{eq:dh_ds}
\end{align}
where $\beta \triangleq 3/r + \alpha$ and we have used
$\partial\alpha / \partial\sm = \alpha / (2\sm)$ from~\eqref{eq:alpha}.
The derivative~\eqref{eq:dh_dr} contains a real part~$\beta$
arising from the near-field decay~$r^{-3}$ and the attenuation
$e^{-\alpha r}$, and an imaginary part~$\alpha$ from the
medium-induced phase rotation.  The derivative~\eqref{eq:dh_ds}
is proportional to~$(1+\jj)$, reflecting the equal sensitivity
of the channel amplitude and phase to changes in~$\sm$.

Substituting~\eqref{eq:dh_dr}--\eqref{eq:dh_ds}
into~\eqref{eq:fim_general} and factoring
$\gamma \triangleq 2LP_{\mathrm{tx}}|h|^2 / (N_0 B)$, the FIM
takes the form
\begin{equation}\label{eq:fim}
  \FIM = \gamma
  \begin{bmatrix}
    \beta^2 + \alpha^2
    &\; \dfrac{\alpha r}{2\sm}\,(\beta + \alpha)
    \\[6pt]
    \dfrac{\alpha r}{2\sm}\,(\beta + \alpha)
    &\; \dfrac{\alpha^2 r^2}{2\sm^2}
  \end{bmatrix}.
\end{equation}
The off-diagonal term is nonzero whenever
$\alpha > 0$ (i.e., $\sm > 0$), indicating that range and
conductivity estimation are inherently coupled through the
medium loss.
The following result characterizes this coupling.

\begin{theorem}\label{thm:penalty}
Let $\rho$ denote the FIM correlation coefficient
$\rho \triangleq J_{12}/\sqrt{J_{11}\,J_{22}}$.
Then
\begin{equation}\label{eq:rho}
  \rho(\alpha r)
  = \frac{3 + 2\alpha r}
         {\sqrt{2\big[(3 + \alpha r)^2 + (\alpha r)^2\big]}}\,,
\end{equation}
and the joint estimation penalties for~$r$ and~$\sm$ are
\begin{equation}\label{eq:penalty}
  \Delta_r = \Delta_{\sm}
  = \frac{1}{1 - \rho^2} \geq 1.
\end{equation}
\end{theorem}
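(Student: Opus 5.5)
The plan is to work directly from the closed-form FIM in~\eqref{eq:fim}. First, I will compute $\rho = J_{12}/\sqrt{J_{11}J_{22}}$. The common factor $\gamma$ cancels, and $\sqrt{J_{22}/\gamma} = \alpha r/(\sqrt{2}\,\sm)$. Hence
\[
\rho = \frac{\frac{\alpha r}{2\sm}(\beta+\alpha)}{\sqrt{\beta^2+\alpha^2}\,\frac{\alpha r}{\sqrt{2}\,\sm}}
= \frac{\beta+\alpha}{\sqrt{2(\beta^2+\alpha^2)}} .
\]
Every factor involving $\sm$ cancels at this point, which is why $\rho$ ends up depending only on the dimensionless product $\alpha r$. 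I then substitute $\beta = 3/r + \alpha$ and multiply numerator and denominator by $r>0$. This gives $r(\beta+\alpha) = 3 + 2\alpha r$ and $r^2(\beta^2+\alpha^2) = (3+\alpha r)^2 + (\alpha r)^2$, which is exactly~\eqref{eq:rho}.

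Second, I will use the standard inverse of a $2\times2$ symmetric matrix, $[\FIM^{-1}]_{11} = J_{22}/(J_{11}J_{22}-J_{12}^2)$. Dividing by $1/J_{11}$ gives
\[
\Delta_r = \frac{J_{11}J_{22}}{J_{11}J_{22}-J_{12}^2} = \frac{1}{1-\rho^2}.
\]
The same computation with $[\FIM^{-1}]_{22} = J_{11}/\det\FIM$ and $\CRB_{\sm}^{\mathrm{single}} = 1/J_{22}$ yields the identical expression, so $\Delta_r = \Delta_{\sm}$. Once $\rho$ is known, this step is purely algebraic.

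The main care point is justifying $\Delta \ge 1$, which amounts to showing $0 \le \rho^2 < 1$, i.e.\ that $\FIM$ is nonsingular. By Cauchy--Schwarz,
\[
(\beta+\alpha)^2 \le 2(\beta^2+\alpha^2),
\]
with equality iff $\beta = \alpha$. But $\beta - \alpha = 3/r > 0$, so the inequality is strict and $\rho^2 < 1$. Equivalently, a direct expansion gives
\[
2\big[(3+x)^2 + x^2\big] - (3+2x)^2 = 9 > 0, \qquad x = \alpha r .
\]
This identity is worth writing down explicitly, since it also gives the clean closed form $1-\rho^2 = 9/\big(2[(3+x)^2+x^2]\big)$, and hence $\Delta = 2[(3+x)^2+x^2]/9$. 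As a check, $x\to0$ gives $\Delta = 2$, matching the 3\,dB claim in the abstract. Positivity of $\rho$, and therefore $\Delta \ge 1$ with finite value, holds for all $\alpha>0$. At $\alpha = 0$ the FIM is singular and $\rho$ is undefined, so the theorem implicitly assumes $\sm>0$, consistent with $\param\in\R^2_{>0}$.
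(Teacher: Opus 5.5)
Your proposal is correct and follows essentially the same route as the paper's proof: read $J_{11},J_{12},J_{22}$ off \eqref{eq:fim}, cancel $\gamma$ and $\sm$ to get $\rho^2=(\beta+\alpha)^2/[2(\beta^2+\alpha^2)]$, substitute $\beta=3/r+\alpha$, and use $[\FIM^{-1}]_{ii}=J_{jj}/\det\FIM$ to get $\Delta_r=\Delta_{\sm}=1/(1-\rho^2)$. Your identity $2[(3+x)^2+x^2]-(3+2x)^2=9$ (with $x=\alpha r$) is a worthwhile addition, because it proves the strict bound $\rho^2<1$ (indeed $\Delta=2+\tfrac{4}{3}x+\tfrac{4}{9}x^2\ge 2$), which the paper simply takes for granted by calling $\FIM$ positive definite; the one slip is your closing aside, since as $\sm\to 0$ the factor $\alpha/\sm\propto\sm^{-1/2}$ makes $J_{12}$ and $J_{22}$ blow up rather than vanish, so the FIM does not become singular and $\rho\to 1/\sqrt{2}$ (Corollary~\ref{cor:3dB}), although restricting to $\sm>0$ is still right.
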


\begin{IEEEproof}
The diagonal and off-diagonal entries of~$\FIM$ in~\eqref{eq:fim}
yield $J_{11} = \gamma(\beta^2+\alpha^2)$,
$J_{22} = \gamma\alpha^2 r^2/(2\sm^2)$, and
$J_{12} = \gamma \alpha r (\beta+\alpha)/(2\sm)$.
Direct computation gives
$J_{12}^2/(J_{11}\,J_{22})$
$= (\beta+\alpha)^2 / [2(\beta^2+\alpha^2)]$,
which equals~$\rho^2$ in~\eqref{eq:rho} after substituting
$\beta = 3/r + \alpha$.
For any $2 \times 2$ positive-definite matrix, the inverse
satisfies $[\FIM^{-1}]_{ii} = J_{jj}/\det(\FIM)$, so
$\Delta_r = J_{11}[\FIM^{-1}]_{11}
= J_{11}J_{22}/\det(\FIM) = 1/(1-\rho^2)$.
The same identity holds for~$\Delta_{\sm}$ by symmetry.
\end{IEEEproof}

\begin{corollary}[Near-field limit]\label{cor:3dB}
As $\alpha r \to 0$ (equivalently $\kappa r \to 0$),
the correlation coefficient satisfies
$\rho \to 1/\sqrt{2}$,
and hence
\begin{equation}\label{eq:3dB}
  \lim_{\kappa r \to 0}\, \Delta_r
  = \lim_{\kappa r \to 0}\, \Delta_{\sm} = 2
  \quad (\text{i.e., } 3.01~\text{dB}).
\end{equation}
\end{corollary}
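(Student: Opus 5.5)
The plan is to read the corollary directly off Theorem~\ref{thm:penalty}. The theorem already expresses $\rho$ as an explicit function of the single dimensionless variable $x \triangleq \alpha r = \kappa r/\sqrt{2}$, and gives $\Delta_r = \Delta_{\sm} = 1/(1-\rho^2)$. The two limits $\kappa r \to 0$ and $\alpha r \to 0$ are therefore the same limit $x \to 0^+$. Everything reduces to evaluating a limit of an elementary continuous function of $x$.

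First, I will note that the expression in \eqref{eq:rho},
\begin{equation*}
  \rho(x) = \frac{3+2x}{\sqrt{2\left[(3+x)^2 + x^2\right]}},
\end{equation*}
is continuous on $[0,\infty)$. Its denominator is bounded below by $\sqrt{18} > 0$ there. Hence I can substitute $x = 0$ directly, giving $\rho(0) = 3/\sqrt{2\cdot 9} = 1/\sqrt{2}$.

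Second, since $\rho^2 \to 1/2 < 1$, the map $\rho \mapsto 1/(1-\rho^2)$ is continuous at $\rho = 1/\sqrt{2}$. Then \eqref{eq:penalty} gives $\Delta_r = \Delta_{\sm} \to 1/(1-1/2) = 2$, and $10\log_{10}2 \approx 3.01$~dB.

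There is one subtle point, and it is the step I would spend a line on. At $x = 0$ exactly (i.e., $\sm = 0$), the FIM in \eqref{eq:fim} degenerates: $J_{12}$ and $J_{22}$ vanish. The formula of Theorem~\ref{thm:penalty} is derived only for $\alpha > 0$. So the corollary must be read as a one-sided limit $x \to 0^+$, not as an evaluation at $\sm = 0$.

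The key observation is that $\rho$ depends on $x$ alone, because the common factors $\gamma$ and $r/\sm$ cancel in $J_{12}^2/(J_{11}J_{22})$. Consequently $\rho$ stays bounded away from $1$ near $x = 0$ even though individual FIM entries tend to zero. Continuity of the closed form on $(0,\infty)$, extended to $x = 0$, then justifies the limit without further work.
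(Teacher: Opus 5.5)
Your proposal is correct and follows the paper's proof: substitute $\alpha r = 0$ into the closed form \eqref{eq:rho} to get $\rho = 3/\sqrt{18} = 1/\sqrt{2}$, then $\Delta_r = \Delta_{\sm} = 1/(1-\rho^2) = 2$, with your continuity remarks making the one-sided limit rigorous. Your aside about the degeneracy needs a correction, though it does not affect the proof: as $\sm \to 0^+$ at fixed $r$, the entries $J_{12} \propto \alpha/\sm \propto \sm^{-1/2}$ and $J_{22} \propto \alpha^2/\sm^2 \propto 1/\sm$ diverge rather than vanish (the FIM is undefined at $\sm = 0$ because $\partial\alpha/\partial\sm$ blows up), and your argument only needs that $\rho$ depends on $\alpha r$ alone.
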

This follows by setting $\alpha r = 0$ in~\eqref{eq:rho},
giving $\rho = 3/\sqrt{18} = 1/\sqrt{2}$.

The 3\,dB penalty has a geometric origin: as $\kappa r\!\to\!0$,
$\partial h/\partial r$ is dominated by the real-valued
decay~$r^{-3}$ while $\partial h/\partial \sm$ retains its
$(1+\jj)$ structure, fixing their angle in the complex plane at
$\rho = 1/\sqrt{2}$ independent of system parameters.

\emph{Remark:}
In the opposite limit $\alpha r \to \infty$,
\eqref{eq:rho} gives $\rho \to 1$, and the penalty diverges.
Physically, both partial derivatives become proportional to
$(1+\jj)\,h\,e^{-(1+\jj)\alpha r}$, so their ``directions'' in the
complex plane align and the FIM becomes rank-deficient;
$r$ and~$\sm$ are then asymptotically indistinguishable from a
single observation.
This motivates operating in the regime $\kappa r < 1$, where
the penalty remains below approximately 5\,dB (see Section~\ref{sec:results}).

\section{Numerical Results}\label{sec:results}

\begin{figure}[!t]
  \centering
  \includegraphics[width=0.45\textwidth]{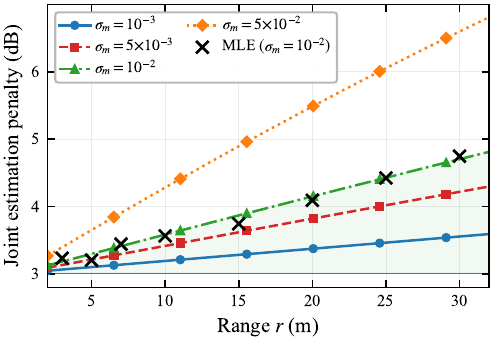}
  \caption{Joint estimation penalty vs.\ range for four medium
  conductivities.  Horizontal line: 3\,dB near-field limit.
  Cross markers: MLE empirical penalty ($N_{\mathrm{mc}} = 5000$,
  $\sm = 0.01$\,S/m).}
  \label{fig:penalty_sweep}
\end{figure}

\begin{figure*}[!t]
  \centering
  \includegraphics[width=1\textwidth]{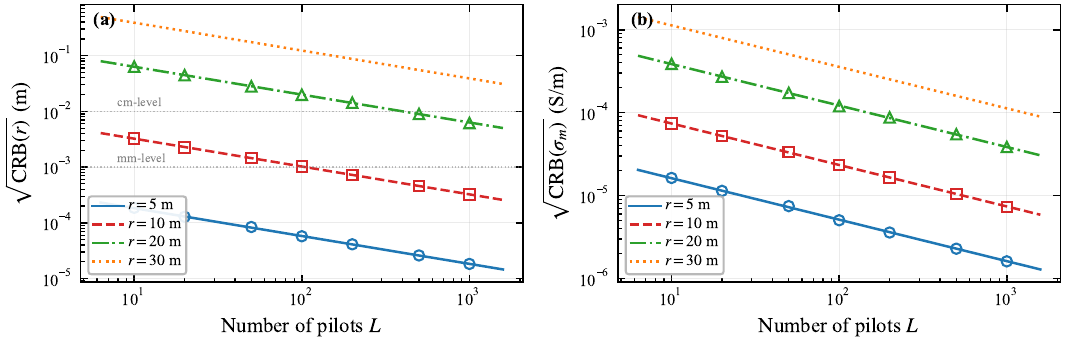}
  \caption{CRB vs.\ pilot count $L$ for (a)~range and
  (b)~conductivity estimation at multiple distances
  ($\sm = 0.01$\,S/m).  Open markers: MLE validation.
  Dashed lines: cm- and mm-level accuracy thresholds.}
  \label{fig:design}
\end{figure*}

The analytical CRB and penalty results are validated through
Monte Carlo simulations with maximum likelihood estimation (MLE).
Unless stated otherwise, the default parameters are:
operating frequency $f_0 = 10$\,kHz, coil radius $a = 0.15$\,m,
number of turns $N = 20$, pilot count $L = 100$,
transmit power $P_{\mathrm{tx}} = 0$\,dBm, and noise bandwidth
$B = 1$\,kHz at temperature $T_0 = 290$\,K.
The MLE is implemented via multi-start Nelder--Mead optimization
of the negative log-likelihood
$\mathcal{L}(\param) = L\,|\hat{h}_{\mathrm{MF}} - h(\param)|^2 / (N_0 B)$,
with $N_{\mathrm{mc}} = 5000$ independent trials per operating point.

Fig.~\ref{fig:joint_vs_single} compares the joint CRB (solid)
with the known-parameter bound (faded) for four conductivities.
The MLE variance (crosses, $\sm = 0.01$\,S/m) tracks the joint
CRB, and a consistent $\approx$3\,dB gap is visible across all
media, confirming Theorem~\ref{thm:penalty}.

Fig.~\ref{fig:penalty_sweep} shows the penalty versus range.
All curves converge to the 3\,dB floor at short range; higher
conductivity causes faster growth due to larger~$\kappa r$.
For typical underground conditions ($\sm \in [10^{-3},
10^{-1}]$\,S/m, $r < 20$\,m), the penalty stays below 5\,dB.

Fig.~\ref{fig:design} illustrates the system design space
by plotting $\sqrt{\CRB}$ against the pilot count~$L$ for
multiple ranges.  Both range and conductivity estimation
accuracy improve as $1/\sqrt{L}$, consistent with the
$L$-scaling in~\eqref{eq:fim_general}.  At $r = 10$\,m,
centimeter-level range accuracy is achievable with
$L \geq 50$ pilots, while $r = 20$\,m requires $L \geq 200$.
The MLE markers (open symbols) match the analytical curves
across the full sweep, validating the CRB achievability
for all tested configurations.

\section{Conclusion}\label{sec:conclusion}

This letter has derived closed-form Cram\'{e}r--Rao bounds for
joint range and medium conductivity estimation in
magneto-inductive ISAC systems.
The analysis reveals a fundamental 3\,dB joint estimation
penalty in the near-field regime, showing that
range and conductivity remain distinguishable owing to
the distinct information contributions of geometric spreading
and eddy-current absorption.
Monte Carlo MLE simulations confirm the tightness of the bound
across the tested parameter space.
Future work will extend the analysis to multi-coil arrays,
frequency-diverse pilots, and experimental validation in
field-deployed underground sensor networks.

\bibliographystyle{IEEEtran}
\bibliography{refs}

@article{Liu2022JSAC,
  author    = {Fan Liu and Yuanhao Cui and Christos Masouros and Jie Xu and Tony Xiao Han and Yonina C. Eldar and Stefano Buzzi},
  title     = {Integrated Sensing and Communications: Toward Dual-Functional Wireless Networks for {6G} and Beyond},
  journal   = {IEEE J. Sel. Areas Commun.},
  volume    = {40},
  number    = {6},
  pages     = {1728--1767},
  month     = jun,
  year      = {2022},
  doi       = {10.1109/JSAC.2022.3156632},
}

@article{Zhang2021JSTSP,
  author    = {J. Andrew Zhang and Md Lushanur Rahman and Kai Wu and Xiaojing Huang and Y. Jay Guo and Shanzhi Chen and Jinhong Yuan},
  title     = {An Overview of Signal Processing Techniques for Joint Communication and Radar Sensing},
  journal   = {IEEE J. Sel. Topics Signal Process.},
  volume    = {15},
  number    = {6},
  pages     = {1295--1315},
  month     = nov,
  year      = {2021},
  doi       = {10.1109/JSTSP.2021.3113120},
}

@article{Liu2022COMST,
  author    = {An Liu and Zhe Huang and Min Li and Yubo Wan and Wenrui Li and Tony Xiao Han and Chenchen Liu and Rui Du and Danny Kai Pin Tan and Jianmin Lu and Yuan Shen and Fabiola Colone and Kevin Chetty},
  title     = {A Survey on Fundamental Limits of Integrated Sensing and Communication},
  journal   = {IEEE Commun. Surveys Tuts.},
  volume    = {24},
  number    = {2},
  pages     = {994--1034},
  year      = {2022},
  doi       = {10.1109/COMST.2022.3149272},
}

@article{Sun2010,
  author    = {Zhi Sun and Ian F. Akyildiz},
  title     = {Magnetic Induction Communications for Wireless Underground Sensor Networks},
  journal   = {IEEE Trans. Antennas Propag.},
  volume    = {58},
  number    = {7},
  pages     = {2426--2435},
  month     = jul,
  year      = {2010},
  doi       = {10.1109/TAP.2010.2048858},
}

@article{Ma2025COMST,
  author    = {Honglei Ma and Erwu Liu and Wei Ni and Zhijun Fang and Rui Wang and Yongbin Gao and Dusit Niyato and Ekram Hossain},
  title     = {Through-the-{Earth} Magnetic Induction Communication and Networking: A Comprehensive Survey},
  journal   = {IEEE Commun. Surveys Tuts.},
  volume    = {28},
  pages     = {2263--2305},
  year      = {2025},
  doi       = {10.1109/COMST.2025.3623258},
}

@article{Kisseleff2018,
  author    = {Steven Kisseleff and Ian F. Akyildiz and Wolfgang H. Gerstacker},
  title     = {Survey on Advances in Magnetic Induction-Based Wireless Underground Sensor Networks},
  journal   = {IEEE Internet Things J.},
  volume    = {5},
  number    = {6},
  pages     = {4843--4856},
  month     = dec,
  year      = {2018},
  doi       = {10.1109/JIOT.2018.2870289},
}

@article{Saeed2019COMST,
  author    = {Nasir Saeed and Mohamed-Slim Alouini and Tareq Y. Al-Naffouri},
  title     = {Toward the {Internet} of Underground Things: A Systematic Survey},
  journal   = {IEEE Commun. Surveys Tuts.},
  volume    = {21},
  number    = {4},
  pages     = {3443--3466},
  year      = {2019},
  doi       = {10.1109/COMST.2019.2934365},
}

@article{Saeed2019,
  author    = {Nasir Saeed and Mohamed-Slim Alouini and Tareq Y. Al-Naffouri},
  title     = {{3D} Localization for {Internet} of Underground Things in Oil and Gas Reservoirs},
  journal   = {IEEE Access},
  volume    = {7},
  pages     = {121769--121780},
  year      = {2019},
  doi       = {10.1109/ACCESS.2019.2937915},
}

@article{Messer2006Science,
  author    = {Hagit Messer and Artem Zinevich and Pinhas Alpert},
  title     = {Environmental Monitoring by Wireless Communication Networks},
  journal   = {Science},
  volume    = {312},
  number    = {5774},
  pages     = {713},
  month     = may,
  year      = {2006},
  doi       = {10.1126/science.1120034},
}

@article{Akyildiz2015,
  author    = {Ian F. Akyildiz and Pu Wang and Zhi Sun},
  title     = {Realizing Underwater Communication Through Magnetic Induction},
  journal   = {IEEE Commun. Mag.},
  volume    = {53},
  number    = {11},
  pages     = {42--48},
  month     = nov,
  year      = {2015},
  doi       = {10.1109/MCOM.2015.7321970},
}

@book{Kay1993,
  author    = {Steven M. Kay},
  title     = {Fundamentals of Statistical Signal Processing: Estimation Theory},
  publisher = {Prentice-Hall},
  address   = {Upper Saddle River, NJ, USA},
  year      = {1993},
}

@article{dong2026miisac,
  author    = {Haofan Dong and Ozgur B. Akan},
  title     = {{MI-ISAC}: Magneto-Inductive Integrated Sensing and Communication in the Reactive Near-Field for {RF}-Denied Environments},
  journal   = {arXiv preprint arXiv:2602.07714
        
        
        
        },
  year      = {2026},
  month     = feb,
  eprint    = {2602.07714},
  archivePrefix = {arXiv},
  primaryClass = {eess.SP},
  note      = {Submitted to IEEE Wireless Commun. Lett.},
  doi       = {10.48550/arXiv.2602.07714},
}

\end{document}